\theoremstyle{plain}
\newtheorem{theorem}                 {Theorem}      [section]
\newtheorem{proposition}  [theorem]  {Proposition}
\newtheorem{lemma}        [theorem]  {Lemma}
\theoremstyle{definition}
\newtheorem{remark}       [theorem]  {Remark}
\newtheorem{definition}   [theorem]  {Definition}
\DeclareMathOperator{\divergence}{div}
\DeclareMathOperator{\const}{const}
\newcommand{\definedas}{\mathrel{\raise.095ex\hbox{\rm :}\mkern-5.2mu=}}
\let\oldmarginpar\marginpar
\renewcommand\marginpar[1]{\-\oldmarginpar[\raggedleft\tiny #1]%
{\raggedright\tiny #1}}
\begin{document}

\baselineskip 18pt \larger

\def \theo-intro#1#2 {\vskip .25cm\noindent{\bf Theorem #1\ }{\it #2}}

\newcommand{\trace}{\operatorname{tr}}

\def \rn{\mathbb R}
\def \hn{\mathbb H}

\def \R{\mathcal R}
\def \A{\mathcal A}
\def \B{\mathcal B}
\def \D{\mathcal D}

\allowdisplaybreaks
\title{Constant mean curvature solutions \\of the Einstein-scalar field constraint equations\\ on
  asymptotically hyperbolic manifolds }

\author{Anna Sakovich}

\keywords{Einstein-scalar field equations, constraint equations, asymptotically hyperbolic manifold, conformal method, constant mean curvature}

\subjclass[2000]{}

\address{Anna Sakovich, Institutionen f\"or Matematik, Kungliga Tekniska H\"ogskolan, 100 44
Stockholm, Sweden} \email{sakovich@math.kth.se}

\begin{abstract}
We follow the approach employed by  Y.~Choquet-Bruhat, J.~Isenberg and
D.~Pollack in the case of closed manifolds
and establish existence and non-existence results for constant mean curvature solutions of the
Einstein-scalar field constraint equations on asymptotically
hyperbolic manifolds.
\end{abstract}

\maketitle

\section{Introduction}

The constraint equations play an important role in the analysis of the
Einstein field equations of General Relativity. Once a set of initial
data which satisfies the
constraint equations is known, then by the fundamental theorem of Y. Choquet-Bruhat \cite{Ch-Br} and  its extension by Y. Choquet-Bruhat and R.~Geroch \cite{Ch-Br-Ge} there exists  a spacetime which solves
the Einstein equations. The constraint equations have been 	
thoroughly studied in the context of vacuum spacetimes (see
\cite{Ba-Is} for a comprehensive review), and recently a number of
results have appeared regarding the constraint equations for
Einstein-scalar field theories. In particular, we would like to mention the 
 works concerning respectively constant mean curvature (CMC)
solutions on closed manifolds \cite{Ch-Br-Is-Po} and \cite{He-Pa-Po}
and on asymptotically Euclidean manifolds \cite{Ch-Br-Is-Po-2}. 

In the light of these recent developments, an interesting task is to
analyze solvability of the Einstein-scalar field constraint equations
on asymptotically hyperbolic manifolds. Intuitively, these spaces can
be described as noncompact Riemannian manifolds with the metric
approaching a metric of constant negative curvature as one approaches
infinity.

It was conjectured in \cite{Ch-Br-Is-Po} that in the CMC case one can
effectively analyze the Einstein-scalar field constraint equations on
asymptotically hyperbolic manifolds using the same strategy as in the
case of closed manifolds. This approach is followed in the current
paper. Owing to conformal relatedness of asymptotically hyperbolic
manifolds to manifolds with negative scalar curvature \cite{An-Cr-Fr},
it makes sense to split the set of conformal data into subclasses
depending on the possible signs for the coefficients of its
terms. This splitting is used as the basis for proving Theorem
\ref{non-existence} and Theorem \ref{existence}, concerning 
non-existence and existence of CMC solutions of the Einstein-scalar
field constraint equations on asymptotically hyperbolic geometries respectively.

\section{The Einstein-scalar field constraint equations and asymptotically hyperbolic manifolds}

\subsection{The conformal method}

Consider an $(n+1)$-dimensional manifold $N$ with a spacetime metric
$\gamma$ and a real-valued scalar field $\Psi$.

Given an $n$-dimensional spacelike hypersurface $M$ in $N$, let 
$\bar h$ be its induced metric and $\bar K$ its second fundamental
form. Denote by $\bar \psi$ the restriction of the scalar field $\Psi$ 
to $M$ and by $\bar \pi$ the value of the derivative of $\Psi$ in the
direction of the unit normal of $M$ in $(N,\gamma)$.

Then the Einstein-scalar field constraint equations \cite{Ch-Br-Is-Po} comprise the
Hamiltonian constraint
\begin{equation}\label{constr1}
R_{\bar{h}}-\left|\bar K \right|^2 _{\bar h }
+ \left(\trace \bar K \right)^2
=
\bar{\pi}^2+\left|\nabla ^{\bar h} \bar \psi \right|^2 _{\bar h}+2V(\bar \psi),
\end{equation}
and the momentum constraint
\begin{equation}\label{constr2}
\divergence _{\bar h} \bar K - \nabla ^{\bar h} 
\left(\trace \bar K \right)
=
-\bar \pi \nabla ^{\bar h} \bar \psi,
\end{equation}
where all derivatives and norms are taken with respect to the metric
$\bar h$ on $M$, the potential $V$ 
is a smooth function of a real variable, and $R_{\bar h}$ denotes the
scalar curvature of $\bar h$. Note that in general it is not required that 
$V(0)$ should be equal to zero. For example, one can consider potentials with 
strictly positive minimum, which lead to accelerated expansion in cosmological 
models \cite{Re}.

If one can solve these equations for the Cauchy data $(\bar h, \bar K,
\bar \psi, \bar \pi)$ defined on a chosen $n$-dimensional manifold $M$
then there exists an $(n+1)$-dimensional spacetime solution 
$(M \times \rn, \gamma, \Psi)$ of the Einstein-scalar field equations which
is consistent with the given Cauchy data on $M$ (see, for example, \cite[Theorem 14.2]{Hans}). 

There is a standard procedure for rewriting the constraint equations
in a form which is more suitable for analysis, namely the conformal
method \cite{Ch-Br-Is-Po}. The idea is to split the Cauchy data on $M$
into (i) the (freely chosen) conformal background data, 
which in the scalar field
case consist of a Riemannian metric $h$, a symmetric trace-free and
divergence-free (0,2)-tensor $\sigma$ and scalar functions $\tau$,
$\psi$ and $\pi$ on $M$, and (ii) the determined data, which comprise
a vector field $W$ and a positive function $\phi$. Denote by $\nabla ^h$
the covariant derivative with respect to $h$ and by $\Delta _{h}$ the
nonpositive Laplacian on functions, i.e. 
$\Delta _{h}=\divergence _{h} \circ \nabla^h$. 
Let $\D_h$ be the conformal Killing operator relative to $h$,
defined (in index notation) by 
$(\D_h W)_{ab} \definedas \nabla ^h _a W_b + \nabla^h _b W_a -
\frac{2}{n}h _{ab}\nabla^h _m W^m$. The kernel of $\D_h$ consists of the
conformal Killing vector fields on $(M,h)$. Then the system
(\ref{constr1})-(\ref{constr2}) is solvable if and only if for some
choice of conformal background data $(h,\sigma,\tau,\psi,\pi)$ one can solve the
conformally formulated Einstein-scalar field constraint equations
\begin{equation}\label{Hamiltonian}
\begin{split}
\Delta _{h} \phi &-\frac{n-2}{4(n-1)}
\left( R_h - |\nabla ^h \psi|^2_{h}\right)\phi \\
&+
\frac{n-2}{4(n-1)}\left(|\sigma + \D_h W| _{h}^2
  +\pi^2\right)\phi^{-\frac{3n-2}{n-2}}\\
&-\frac{n-2}{4(n-1)}\left(\frac{n-1}{n}\tau^2
-2V(\psi)\right)\phi^{\frac{n+2}{n-2}}
=0,
\end{split}
\end{equation}
\begin{equation}\label{momentum}
\divergence _{h}(\D_h W) 
=
\frac{n-1}{n} \,\phi ^{\frac{2n}{n-2}} \,\nabla ^h \tau - \pi \nabla ^h \psi
\end{equation}
for the determined data $W$ and $\phi>0$, in which case the initial data 
\begin{equation}\label{reconstructed}
\begin{split}
\bar h &=\phi ^{\frac{4}{n-2}} h, \\ 
\bar K &= \phi ^{-2}(\sigma+\D_h W)+\frac{\tau}{n}\,\phi ^{\frac{4}{n-2}} h,\\
\bar \psi &=\psi,\\
\bar \pi &=\phi ^{-\frac{2n}{n-2}}\pi
\end{split}
\end{equation}
solve the original Einstein-scalar field constraint equations.

If one chooses to work under the CMC assumption $\tau=\const$,  
the system (\ref{Hamiltonian})-(\ref{momentum}) becomes
semi-decoupled, which means that the
conformally formulated momentum constraint (\ref{momentum}) becomes a
linear, elliptic, vector equation for $W$ in which the unknown $\phi$
does not appear. If it can be solved, the problem reduces to finding a
positive solution $\phi$ to the conformally formulated Hamiltonian
constraint (\ref{Hamiltonian}), which is commonly referred to as the
Einstein-scalar field Lichnerowicz equation.

In the sequel we will repeatedly use the fact that the
Einstein-scalar field Lichnerowicz equation is conformally covariant
in the following sense. The function $\phi>0$ is a solution to (\ref{Hamiltonian})
for the Einstein-scalar field conformal background data $(h, \sigma, \tau, \psi,
\pi)$, where the vector field $W$ solves (\ref{momentum}) with
respect to $(h, \sigma, \tau, \psi, \pi)$,  if and only if $\theta ^{-1} \phi$
is a solution to (\ref{Hamiltonian}) with respect to the conformally
transformed background data set
\begin{equation*}
(\widetilde h, \widetilde \sigma, \widetilde \tau, \widetilde \psi,
\widetilde \pi) 
\definedas 
(\theta ^{\frac{4}{n-2}}h, \theta^{-2}\sigma,\tau,\psi,\theta ^{-\frac{2n}{n-2}}\pi),
\end{equation*}
where the vector field $\widetilde W$ solves (\ref{momentum}) with
respect to $(\widetilde h, \widetilde \sigma, \widetilde \tau,
\widetilde \psi, \widetilde \pi)$ \cite{Ch-Br-Is-Po}.

\subsection{The Einstein-scalar field constraint equations on closed manifolds}\label{closed}

It should be emphasized that the current work is largely inspired by
\cite{Ch-Br-Is-Po}.  Below we give a brief overview of the method for 
analyzing the Einstein-scalar field constraint equations on closed
(compact without boundary) manifolds developed in that paper.

The authors work in the CMC setting and assume that all
conformal background data sets $(h,\sigma,\tau,\psi,\pi)$ are
smooth. 
In this case, a
smooth solution of the equation (\ref{momentum}) exists provided
that the right hand side 
$\pi \nabla ^h \psi$ is
orthogonal to the space of conformal Killing vector fields on 
$(M, h)$ and is unique if this space is empty.

In order to analyze the Einstein-scalar field Lichnerowicz equation
(\ref{Hamiltonian}), the authors write it in the form
\begin{equation*}
\Delta _{h} \phi - \R_{h,\psi}\phi 
+ \A_{h,W,\pi}\phi^{-\frac{3n-2}{n-2}}
- \B_{\tau,\psi}\phi^{\frac{n+2}{n-2}}
=0,
\end{equation*}
and divide the background data sets into subclasses depending on the
possible signs for the coefficients
\begin{equation*}
\R _{h,\psi}\definedas\frac{n-2}{4(n-1)}(R_h - |\nabla ^h \psi|^2
_{h}), \hspace{0.5cm} \A_{h, W,
  \pi}\definedas\frac{n-2}{4(n-1)}(|\sigma+\D_h W| _{h} ^2+\pi^2),
\end{equation*}
and
\begin{equation*}
\B _{\tau,\psi}\definedas\frac{n-2}{4(n-1)}\left(\frac{n-1}{n}\tau ^2-2V(\psi)\right).
\end{equation*}

In view of the conformal covariance, this splitting of the conformal 
background data set is convenient. Indeed, it was shown in \cite{Ch-Br-Is-Po}
that there always exists a smooth $\theta >0$ such that
$\R_{\widetilde h,\widetilde \psi}$, computed  with respect to the
conformally transformed metric $\widetilde h = \theta ^{\frac{4}{n-2}}
h$ is either positive, negative, or identically zero. As for $\B
_{\tau, \psi}$, there are six different possibilities, namely, this
coefficient can be strictly positive, greater than or equal to zero,
identically zero, less than or equal to zero, strictly negative, or of
indeterminate sign. This combined with the two options $\A_{h, W,
  \pi}\equiv 0$ and $\A_{h, W, \pi}\not \equiv 0$ gives rise to 36
classes of Einstein-scalar field CMC conformal background data $(h,
\sigma, \tau, \psi, \pi)$.

Concluding this brief overview of \cite{Ch-Br-Is-Po}, we note that for
many of the classes it was possible to determine whether or not the
smooth positive solution exists. More details are to be found in the
original paper.

\subsection{Asymptotically hyperbolic geometries}\label{as-hyp} 

The goal of this paper is to find asymptotically hyperbolic solutions
to (\ref{constr1})-(\ref{constr2}). This is done by solving  
(\ref{Hamiltonian})-(\ref{momentum}) with asymptotically hyperbolic
conformal background data and correct asymptotics of the solutions.  

The prototype for asymptotically hyperbolic manifolds is a constant
negative curvature hyperboloid in Minkowski spacetime. They are the
interiors of compact manifolds with boundary; the points on the
boundary represent ``points at infinity'' for the asymptotically
hyperbolic manifold.

\begin{definition}\cite{An-Cr-Fr}\label{hyperboloidal}
Let $(M,g)$ denote an oriented, compact $C^ \infty$ Riemannian
manifold of dimension $n \geq 3$, with nonempty boundary $\partial M$
and interior $\widetilde{M}$. Assume that $\rho \in C^ \infty (M)$ is
a defining function for $\partial M$, i.e. $\rho > 0$ on
$\widetilde{M}$ while $\rho =0$ but $d \rho \neq 0$ everywhere on
$\partial M$. Then the manifold $(\widetilde{M}, h)$, where $h=\rho
^{-2} g$, is said to be conformally compact. If, in addition, $|d
\rho| _g=1$ holds on $\partial M$, then $(\widetilde{M}, h)$ is called
asymptotically hyperbolic.
\end{definition}

A standard calculation shows that the sectional curvature $K_h$ of a
conformally compact manifold $(\widetilde M,h)$ satisfies $K_h (p)
\rightarrow -|d \rho |_g^2 (q)$ as $p \rightarrow q \in \partial M$,
which means that if $(\widetilde M,h)$ is asymptotically hyperbolic
then all sectional curvatures tend to $-1$ at infinity.

In this paper we will largely rely upon the fact that any
asymptotically hyperbolic geometry $(\widetilde{M}, h)$ is conformally
related to one with constant negative scalar curvature
\cite{An-Cr-Fr}. Namely, on every conformally compact manifold
$(\widetilde{M}, h)$ there exists a unique function  $w$ such that $w>0$ on $M$ such
that the metric $\widetilde h=\left(w \rho
  ^{\frac{n-2}{2}}\right)^{\frac{4}{n-2}}h=\rho^{-2}\left(w\rho^{\frac{n-2}{2}}\right)^{\frac{4}{n-2}}g$ has scalar curvature
$-n(n-1)$. In addition, on an  asymptotically hyperbolic manifold
$(\widetilde{M}, h)$ the conformal factor  $w \rho ^{\frac{n-2}{2}}$
satisfies $w \rho ^{\frac{n-2}{2}}\rightarrow 1$ as $\rho \rightarrow
0$, which implies that $(\widetilde M, \widetilde h)$ is also
asymptotically hyperbolic.

Another important technical lemma to be used in this work is the
following version of the maximum principle.

\begin{theorem}\cite[Theorem 3.5]{Gr-Lee}  \label{max}
Suppose that $(\widetilde M, h)$ is an asymptotically hyperbolic
manifold and $f \in C^2 (M)$ is bounded. Then there exists a
sequence $x_k \in M$ such that
\begin{itemize} 
\item[(i)] $\lim _{k \rightarrow \infty} f(x_k)=\inf _M f$;

\item[(ii)] $\lim _{k \rightarrow \infty} |\nabla ^h f(x_k)|_h=0$;

\item[(iii)] $\liminf _{k \rightarrow \infty} \Delta _h f(x_k)\geq 0$.
\end{itemize}  
\end{theorem}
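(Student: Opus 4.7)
The plan is to apply an Omori--Yau type argument, exploiting the conformally compact structure to produce a suitable proper test function. If $\inf_M f$ is attained at an interior point we can take a constant sequence, so the interesting case is when the infimum is only approached as one goes to conformal infinity.

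First I would construct an auxiliary function $\varphi \in C^\infty(\widetilde M)$ with three properties: (a) $\varphi(p) \to +\infty$ as $p \to \partial M$; (b) $\varphi \geq 0$ on $\widetilde M$; and (c) both $|\nabla^h \varphi|_h$ and $|\Delta_h \varphi|$ are bounded on $\widetilde M$. The natural choice is $\varphi = -\log \rho + C$ for a suitable constant $C$, where $\rho$ is the defining function from Definition \ref{hyperboloidal}. Using the conformal relation $h = \rho^{-2} g$, a direct computation gives
\begin{equation*}
|\nabla^h(-\log\rho)|_h^2 = |d\rho|_g^2, \qquad \Delta_h(-\log\rho) = (n-1)|d\rho|_g^2 - \rho\,\Delta_g \rho.
\end{equation*}
Since $g$ and $\rho$ are smooth on the compact manifold $M$, both expressions are bounded, and by the asymptotically hyperbolic condition $|d\rho|_g \to 1$ at $\partial M$. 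Adding a constant makes $\varphi \geq 0$.

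Next I would set $f_k \definedas f + \tfrac{1}{k}\varphi$. Since $f$ is bounded and $\varphi \to +\infty$ at $\partial M$, the function $f_k$ is proper and bounded below on $\widetilde M$, so it attains its infimum at some interior point $x_k \in \widetilde M$. At this interior minimum one has $\nabla^h f_k(x_k) = 0$ and $\Delta_h f_k(x_k) \geq 0$, whence
\begin{equation*}
|\nabla^h f(x_k)|_h = \tfrac{1}{k}|\nabla^h \varphi(x_k)|_h, \qquad \Delta_h f(x_k) \geq -\tfrac{1}{k}\Delta_h \varphi(x_k).
\end{equation*}
The uniform bounds on $|\nabla^h\varphi|_h$ and $|\Delta_h \varphi|$ immediately give (ii) and (iii) upon letting $k \to \infty$.

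For (i), I would use the minimizing property of $x_k$: for any $y \in \widetilde M$,
\begin{equation*}
f(x_k) + \tfrac{1}{k}\varphi(x_k) \leq f(y) + \tfrac{1}{k}\varphi(y),
\end{equation*}
and $\varphi(x_k) \geq 0$ yields $f(x_k) \leq f(y) + \tfrac{1}{k}\varphi(y)$. Given $\epsilon > 0$, first fix $y$ with $f(y) < \inf_M f + \epsilon$, then take $k$ large enough that $\tfrac{1}{k}\varphi(y) < \epsilon$; this shows $\limsup_k f(x_k) \leq \inf_M f$, while the reverse inequality is trivial. The main technical point is the construction of $\varphi$ and the verification that its gradient and Laplacian in the singular metric $h = \rho^{-2}g$ remain bounded up to infinity; everything else is a straightforward minimization argument.
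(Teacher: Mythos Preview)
The paper does not supply its own proof of this statement; it is quoted verbatim as \cite[Theorem 3.5]{Gr-Lee} and used as a black box. Your argument is correct and is in fact essentially the Omori--Yau argument that Graham and Lee themselves give: they too perturb $f$ by a small multiple of $-\log\rho$, use the conformal formula to bound the $h$-gradient and $h$-Laplacian of $-\log\rho$ in terms of smooth quantities on the compactification, and read off (i)--(iii) from the first- and second-order conditions at the interior minimum of the perturbed function. Your verification of the two identities $|\nabla^h(-\log\rho)|_h^2 = |d\rho|_g^2$ and $\Delta_h(-\log\rho) = (n-1)|d\rho|_g^2 - \rho\,\Delta_g\rho$ is exactly the computation needed, and the rest is routine.
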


When dealing with asymptotically hyperbolic manifolds, it is important
to control the behavior of the data near $\partial M$, which is achieved by using weighted spaces with weights being the powers of the defining function $\rho$. The simplest example of weighted space is $C^0_\delta (M)$, $\delta \in \rn$. Its elements are continuous functions such that $\|u\|_{C^0_\delta(M)}\definedas \sup _M |\rho^{-\delta}u|$ is finite. Using the standard definitions of $W^{k,p}(\widetilde{M},h)$ and $C^{k,\alpha}(\widetilde{M},h)$ (see e.g. \cite{Lee}) for each real number $\delta$ one can also define the weighted Sobolev spaces $W_\delta^{k,p}(M)\definedas \{\,\rho^\delta u\,: \,u\in W^{k,p}(\widetilde{M},h)\,\}$
and the weighted H\"older spaces $C_\delta^{k,\alpha}(M)\definedas \{\,\rho^\delta u\,: \,u\in C^{k,\alpha}(\widetilde{M},h)\,\}$ with the respective norms given by $\|u\|_{W^{k,p}_\delta(M)}\definedas \|\rho^{-\delta}u\|_{W^{k,p}(\widetilde{M},h)}$ and $\|u\|_{C^{k,\alpha}_\delta (M)}\definedas \|\rho^{-\delta}u\|_{C^{k,\alpha}(\widetilde{M},h)}$. All results pertaining to these spaces and elliptic operators on asymptotically hyperbolic manifolds to be used in this paper can be found in \cite{Lee}. 

\section{Main results}

From now on let $(\widetilde M, h)$ be an asymptotically hyperbolic
manifold as in Definition \ref{hyperboloidal} and suppose that $\dim
M=n$. Our goal is to analyze the solvability of the conformally
formulated Einstein-scalar field constraint equations
(\ref{Hamiltonian})-(\ref{momentum}) on $(\widetilde M, h)$, for which
we will employ essentially the same strategy as in
\cite{Ch-Br-Is-Po}. In particular, we will assume that all conformal
background data $(h, \sigma, \tau, \psi, \pi)$ are smooth and that
$\tau$ is constant.

\subsection{Solving the momentum constraint}

By analogy with \cite{Ch-Br-Is-Po}, before formulating our results for
the full set of constraint equations we restrict ourselves to
considering only those sets of conformal background data  for which
the conformally formulated momentum constraint (\ref{momentum}) is
solvable. Due to the CMC assumption and
the following result many such sets of conformal background data can be found. 

\begin{proposition}\cite[Proposition G]{Lee}\label{Fredholm}
For an integer $k>0$ and real $\alpha$ and $\delta$ such that
$0<\alpha<1$ and $0<\delta<n$,  the vector  Laplacian $\divergence
_{h} \circ \D_h : C_{\delta} ^{k+2,\alpha}(M,h) \rightarrow C_{\delta}
^{k,\alpha}(M,h) $  is a Fredholm operator. Its index is zero, and its kernel is equal to the $L^2$ kernel of $\divergence
_{h} \circ \D_h $.
\end{proposition}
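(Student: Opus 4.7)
The plan is to invoke the general Fredholm theorem for geometric elliptic operators on asymptotically hyperbolic manifolds proved by Lee (and recalled in the same paper), after verifying that the vector Laplacian $P \definedas \divergence_h \circ \D_h$ satisfies its hypotheses. First I would check that $P$ is uniformly elliptic and formally self-adjoint: its principal symbol $\sigma_\xi(P)V = -|\xi|_h^2\, V - (1-2/n)\,\langle \xi, V\rangle_h\,\xi^{\sharp}$ is invertible for every $\xi \neq 0$, and using that $\D_h W$ is symmetric and trace-free, integration by parts yields
\begin{equation*}
\int_M \langle \D_h V, \D_h W\rangle_h \, dV_h = -2 \int_M \langle PV, W\rangle_h \, dV_h
\end{equation*}
for compactly supported $V, W$, so $P$ is $L^2$-symmetric.

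Next I would compute the critical (indicial) weights of $P$. Freezing the coefficients of $P$ at a boundary point in boundary normal coordinates in which $h = \rho^{-2}(d\rho^2 + h_{\partial M}) + O(\rho^{-1})$ and decomposing $W$ into its normal and tangential parts, the leading behavior of $P$ reduces to a constant-coefficient matrix ODE in $\rho$ on hyperbolic space. Seeking solutions of the form $\rho^s W_0$, the resulting characteristic equation factors with roots $s = 0$ and $s = n$. By Lee's Fredholm theorem for geometric elliptic operators, the map $P \colon C_\delta^{k+2,\alpha}(M,h) \to C_\delta^{k,\alpha}(M,h)$ is then Fredholm for every weight strictly between these indicial roots, that is, for $0 < \delta < n$.

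For the index and the identification of the kernel, I would use formal self-adjointness. Since $dV_h = \rho^{-n}\, dV_0$ near $\partial M$, the $L^2$ pairing induces a duality sending weight $\delta$ to weight $n-\delta$, and self-adjointness identifies the cokernel of $P$ at weight $\delta$ with the kernel of $P$ at weight $n-\delta$, which also lies in the Fredholm range. Furthermore, Lee's indicial regularity theorem implies that any $W \in \ker P \cap C^{k+2,\alpha}_\delta$ with $\delta \in (0,n)$ in fact decays at rate $\rho^n$, since no indicial root lies strictly between $0$ and $n$, and is therefore square-integrable. Conversely, any $L^2$ solution of $PW=0$ is smooth by elliptic regularity and, by the asymptotic expansion theory for geometric elliptic operators, lies in $C^{k+2,\alpha}_\delta$ for every $\delta \in (0,n)$. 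Combined, these facts show that the kernel coincides with the $L^2$ kernel and that the index is zero.

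The principal technical difficulty is the indicial root calculation, because $P$ couples the normal and tangential components of $W$ and one cannot simply reuse the scalar case. Once the frozen-coefficient matrix system is diagonalized and its characteristic polynomial is verified to have roots only at $0$ and $n$, the remainder of the argument is a straightforward application of Lee's machinery.
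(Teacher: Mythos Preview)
The paper does not prove this proposition at all: it is quoted verbatim as Proposition~G from Lee's memoir and used as a black box, with no argument supplied. So there is nothing in the paper to compare your proposal against.

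That said, your outline is a faithful sketch of how Lee's general theory yields the result: one checks that $\divergence_h\circ\D_h$ is a geometric, uniformly degenerate, formally self-adjoint elliptic operator, computes its indicial roots on the model hyperbolic space, and reads off the Fredholm range, index zero, and the identification of the kernel with the $L^2$ kernel from Lee's abstract Fredholm package. The one place to be careful is the indicial computation: the normal--tangential coupling means the characteristic system is genuinely a matrix problem, and you should verify directly (rather than assert) that the relevant indicial radius gives exactly the interval $(0,n)$ in Lee's weight conventions for vector fields. Once that is confirmed, the rest follows from Lee's Theorems~C and the self-adjoint duality argument as you describe.
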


Since an asymptotically hyperbolic manifold does not have any conformal 
Killing vector fields in $L^2 (M,h)$ \cite[Lemma 6.7]{Romain},
the operator in the above proposition is an isomorphism.

\subsection{Solving the Hamiltonian constraint}

Our next goal is to determine for which choices of the Einstein-scalar
field conformal data $(h, \sigma, \tau, \psi, \pi)$ such that
$(\widetilde M, h)$ is asymptotically hyperbolic the Lichnerowicz
equation with a vector field $W$,
\begin{equation}\label{LE}
\Delta _h \phi-\R_{h,\psi}\phi
+\A_{h,W,\pi}\phi^{-\frac{3n-2}{n-2}}-\B_{\tau,\psi}\phi^{\frac{n+2}{n-2}}
=0,
\end{equation} 
admits a smooth solution $\phi >0$ which satisfies the boundary
condition
\begin{equation}\label{bd}
\phi \rightarrow 1 \text{ as } \rho \rightarrow 0. 
\end{equation}
This boundary condition guarantees that 
$\bar h = \phi^{\frac{4}{n-2}} h=\rho^{-2}\phi^{\frac{4}{n-2}}g$ is also asymptotically hyperbolic.

The conformal covariance property and the results recalled
in Section \ref{as-hyp} allow us to assume that $R_h = -n(n-1)$, which
we will do from now on. In addition, since the mean curvature of the
$n$-dimensional constant negative curvature hyperboloid in Minkowski
spacetime is equal to $n$, the same value will be assumed for the
constant mean curvature: $\tau = n$. This gives the following
expressions for the coefficients of (\ref{LE}): 
\begin{equation*}
\R _{h,\psi}=\frac{n-2}{4(n-1)}(-n(n-1)-|\nabla ^h\psi|^2 _{h}), 
\hspace{0.5cm} 
\A_{h, W, \pi}=\frac{n-2}{4(n-1)}(|\sigma+\D_h W| _h ^2+\pi^2),
\end{equation*}
and
\begin{equation*}
\B _{\tau,\psi}=\frac{n-2}{4(n-1)}\left(n(n-1)-2V(\psi)\right).
\end{equation*}
In order to analyze (\ref{LE}) we split the set of background data
into subclasses in the same way as it was done for closed manifolds
(see Section \ref{closed}). Since  $\R _{h, \psi}$ is strictly
negative, we see that the possibilities we have to analyze are the
same as those listed in the first rows of Table 1 and Table 2 in
\cite{Ch-Br-Is-Po}. Anticipating the theorems to be formulated below
we note that our results bear a lot of similarity  with those listed
in the aforementioned tables. In particular, we will show 
that if the potential $V$  is such that $\B _{\tau,\psi}$ is
non-positive or zero, then (\ref{LE})-(\ref{bd}) admits no solution. We will also
see that, under reasonable restrictions on the conformal background
data, the condition $\B _{\tau,\psi}>0$ guarantees the solvability of the Einstein-scalar field
Lichnerowicz equation (\ref{LE}) with the boundary condition (\ref{bd}), while
in the case $\B _{\tau,\psi}\geq 0$ a partial result can be proved.
 
\subsection{The main theorems}

In this paper we prove the following two
theorems. The first one is a non-existence result.

\begin{theorem}\label{non-existence}
Assume that we are given a manifold $M$ and conformal background data
$(h,\sigma,\tau,\psi,\pi)$,  with $\tau=n$ on $M$, such that
$(\widetilde M,h)$ is asymptotically hyperbolic. If   
\begin{equation*}
\inf V(\psi) \geq \frac{n(n-1)}{2},
\end{equation*}
then there is no solution to (\ref{LE})-(\ref{bd}).
\end{theorem}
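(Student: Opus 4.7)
The plan is to assume that a smooth positive solution $\phi$ of (\ref{LE})-(\ref{bd}) exists and to derive a contradiction from Theorem \ref{max} by exploiting the sign structure of the Lichnerowicz equation. The hypothesis $\inf V(\psi)\geq n(n-1)/2$ forces $\B_{\tau,\psi}\leq 0$ on all of $\widetilde M$, while $\A_{h,W,\pi}\geq 0$ is automatic and $\R_{h,\psi}\leq -\frac{n(n-2)}{4}<0$ follows from the normalization $R_h=-n(n-1)$ together with $|\nabla^h\psi|_h^2\geq 0$. Rewriting (\ref{LE}) as
\begin{equation*}
\Delta_h \phi = \R_{h,\psi}\phi - \A_{h,W,\pi}\phi^{-\frac{3n-2}{n-2}} + \B_{\tau,\psi}\phi^{\frac{n+2}{n-2}},
\end{equation*}
every term on the right is pointwise non-positive for $\phi>0$, and in particular
\begin{equation*}
\Delta_h \phi \leq -\frac{n(n-2)}{4}\,\phi.
\end{equation*}

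Before applying Theorem \ref{max} I would verify that $\phi$ is bounded on $\widetilde M$ and that $m_\ast\definedas\inf_{\widetilde M}\phi$ is strictly positive. This is a routine compactness argument: the boundary condition (\ref{bd}) yields $|\phi-1|<\varepsilon$ on the neighborhood $\{\rho<\rho_0\}$ of $\partial M$ for some $\rho_0=\rho_0(\varepsilon)>0$, while the complement $\{\rho\geq\rho_0\}$ is a closed subset of the compact manifold $M$ contained in $\widetilde M$, hence compact, so that the smooth strictly positive function $\phi$ is automatically bounded above and bounded away from zero there.

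Theorem \ref{max} then supplies a sequence $x_k\in\widetilde M$ with $\phi(x_k)\to m_\ast$ and $\liminf_{k\to\infty}\Delta_h\phi(x_k)\geq 0$. Evaluating the pointwise bound displayed above at $x_k$ and passing to the liminf gives
\begin{equation*}
0\leq \liminf_{k\to\infty}\Delta_h\phi(x_k)\leq -\frac{n(n-2)}{4}\,m_\ast<0,
\end{equation*}
the desired contradiction. No hypothesis on the sign or magnitude of $\A_{h,W,\pi}$ enters the argument, consistently with the statement of the theorem; the one mildly delicate ingredient is the uniform positive lower bound $m_\ast>0$, and the remainder is pure sign bookkeeping on the Lichnerowicz equation combined with the asymptotically hyperbolic maximum principle.
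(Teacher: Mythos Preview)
Your argument is correct and follows essentially the same route as the paper: exploit the sign structure $\R_{h,\psi}<0$, $\A_{h,W,\pi}\geq 0$, $\B_{\tau,\psi}\leq 0$ and apply the Graham--Lee maximum principle (Theorem~\ref{max}) to a minimizing sequence for $\phi$. The only organizational difference is that the paper splits into the two cases ``infimum attained in $\widetilde M$'' versus ``infimum not attained'' and treats the first with the ordinary maximum principle, whereas you handle both at once via Theorem~\ref{max}, after explicitly checking that $\phi$ is bounded with $m_\ast>0$; this is a harmless streamlining.
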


In fact, the argument in the proof shows that there can be no solution to 
(\ref{LE}) which is bounded from below by a positive constant. 

The second theorem is an existence result.

\begin{theorem}\label{existence}
Assume that we are given a manifold $M$ and conformal background data
$(h,\sigma,\tau,\psi,\pi)$  with $\tau=n$ on $M$ such that
$(\widetilde M,h)$ is asymptotically hyperbolic. Suppose that for some
$0<\delta<n-1$ we have 
\begin{equation} \label{exist_ass_1}
2V(\psi)+|\nabla ^h \psi|^2 _{h} \in C^0 _{\delta},
\end{equation}
and the conformally reformulated momentum
constraint (\ref{momentum}) is solvable with the solution $W$
satisfying 
\begin{equation} \label{exist_ass_2}
|\sigma+\D_h W| _h ^2 + \pi^2 \in C^0 _{\delta}.
\end{equation}
If the potential $V$ is bounded from below and satisfies 
\begin{equation} \label{exist_ass_3}
\sup V(\psi) < \frac{n(n-1)}{2},
\end{equation}
then there is a unique positive smooth solution $\phi$ to the
Einstein-scalar field Lichnerowicz equation (\ref{LE}) such that $\phi
-1 \in C^0 _{\delta}$ and the
initial data $(\bar h, \bar K, \bar \psi, \bar \pi)$  
defined in the equations (\ref{reconstructed}) satisfy the
Einstein-scalar field constraint equations
(\ref{constr1})-(\ref{constr2}).
\end{theorem}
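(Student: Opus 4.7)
The plan is to apply the method of sub- and super-solutions with constant barriers to produce a bounded positive solution of (\ref{LE}), and then to upgrade its asymptotic behavior to $\phi-1\in C^0_\delta$ using the linear theory for elliptic operators on weighted spaces on asymptotically hyperbolic manifolds. Throughout I would use the standing normalization $R_h=-n(n-1)$, $\tau=n$, and the fact that (\ref{exist_ass_1}) together with the boundedness of $V$ forces $|\nabla^h\psi|^2_h$ to be bounded on $\widetilde M$.

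I would first check that constants provide global barriers. Because $\R_{h,\psi}\leq -\tfrac{n(n-2)}{4}<0$ uniformly and, by (\ref{exist_ass_3}), $\B_{\tau,\psi}$ is bounded below by a strictly positive constant (while all coefficients are bounded), a direct computation shows that $\phi_-\equiv\epsilon$ is a strict sub-solution for $\epsilon>0$ sufficiently small---the term $-\R_{h,\psi}\epsilon$ dominates $\B_{\tau,\psi}\epsilon^{\frac{n+2}{n-2}}$ since the exponent exceeds $1$, while $\A_{h,W,\pi}\epsilon^{-\frac{3n-2}{n-2}}\geq 0$ only helps---and that $\phi_+\equiv M$ is a strict super-solution for $M$ sufficiently large, because $\B_{\tau,\psi}M^{\frac{n+2}{n-2}}$ dominates the remaining terms. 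After adjustment one may assume $\epsilon\leq 1\leq M$. The standard monotone iteration, adapted to the asymptotically hyperbolic setting by exhausting $\widetilde M$ by bounded domains $\Omega_k$ with smooth boundary, solving Dirichlet problems for (\ref{LE}) with boundary data $1$ between the constant barriers, and passing to the limit via interior Schauder estimates, then produces a smooth positive solution $\phi$ with $\epsilon\leq\phi\leq M$.

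The main step is to refine this to $\phi-1\in C^0_\delta(M)$. Setting $u\definedas\phi-1$ and applying a mean-value expansion of the nonlinearity in (\ref{LE}) around $\phi=1$ yields
\[
\Delta_h u - q(x)\,u = -g(x),
\]
where, using $R_h=-n(n-1)$ and $\tau=n$, the source simplifies to $g=\tfrac{n-2}{4(n-1)}\bigl[|\nabla^h\psi|^2_h + 2V(\psi) + |\sigma+\D_h W|^2_h + \pi^2\bigr]\in C^0_\delta(M)$ by (\ref{exist_ass_1})--(\ref{exist_ass_2}), and the mean-value coefficient $q(x)$ is bounded and tends at infinity to the strictly positive limit $\tfrac{n(n-1)+|\nabla^h\psi|^2_\infty}{n-1}$. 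The operator $\Delta_h-q(x)$ then falls within the Fredholm isomorphism framework of \cite{Lee} for the weighted spaces used in the theorem, yielding $u\in C^{2,\alpha}_\delta(M)\subset C^0_\delta(M)$; smoothness of $\phi$ follows by standard elliptic bootstrapping. Uniqueness follows by a maximum principle argument based on Theorem~\ref{max}: the difference of two solutions satisfies a linear equation whose coefficient is controlled in sign by $\R_{h,\psi}<0$ and $\B_{\tau,\psi}>0$, and vanishes at infinity, forcing it to vanish identically.

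I expect the main obstacle to be the asymptotic refinement step: verifying that the mean-value coefficient $q(x)$ has the required positive asymptotic limit and that the weight $\delta\in(0,n-1)$ lies in the isomorphism range of $\Delta_h-q(x)$ on the weighted spaces of \cite{Lee} requires a careful accounting combining the decay hypothesis (\ref{exist_ass_1}), the strict inequality (\ref{exist_ass_3}), and the indicial-root analysis for asymptotically hyperbolic operators. The construction of constant barriers and the monotone iteration are routine by comparison.
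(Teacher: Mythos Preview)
Your barrier-and-iteration scheme for producing a bounded smooth solution is essentially the same as the paper's, but the paper does \emph{not} separate existence from the asymptotic refinement. Instead, it uses barriers of the form $u_+=\min\{1+B\rho^\delta,C_+\}$ and $u_-=\max\{1-B\rho^\delta,C_-\}$ (Proposition~\ref{subsuperconstruction}), which already encode the decay to $1$ at the rate $\rho^\delta$; the solution produced by Theorem~\ref{subsupertheorem} is then automatically squeezed into $1+C^0_\delta$, and no linear Fredholm step is needed.

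Your linear refinement step, as written, has a circularity that I do not see how to close. The mean-value coefficient $q(x)$ depends on an intermediate value $\xi(x)$ lying between $1$ and $\phi(x)$, and the only information available at that stage is $\epsilon\le\xi\le M$. A short computation (using $\A_{h,W,\pi}\to 0$ and $\R_{h,\psi}+\B_{\tau,\psi}\to 0$) gives, near $\partial M$,
\[
q(x)\;=\;\B_{\tau,\psi}\Bigl(\tfrac{n+2}{n-2}\,\xi(x)^{\frac{4}{n-2}}-1\Bigr)+O(\rho^\delta),
\]
which can be \emph{negative} whenever $\xi(x)<\bigl(\tfrac{n-2}{n+2}\bigr)^{(n-2)/4}$; nothing so far rules this out. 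Thus the operator $\Delta_h-q$ need not have a positive limit at infinity, it need not even be asymptotic to a fixed model operator (note also that $|\nabla^h\psi|^2_h$ need not have a limit---only the combination $2V(\psi)+|\nabla^h\psi|^2_h$ is assumed to decay), and the isomorphism results of \cite{Lee} do not apply directly. To salvage this route you would first have to prove $\phi\to 1$ by some other means, which is precisely what you are trying to establish. The paper's choice of $\rho^\delta$-modified barriers sidesteps the whole issue. For uniqueness, the paper also proceeds differently: it uses conformal covariance to pass to the quotient $\phi_1\phi_2^{-1}$, for which $1$ is an exact solution of the transformed Lichnerowicz equation, and then applies Theorem~\ref{max}; your sketch via the difference runs into the same sign ambiguity for the zeroth-order coefficient.
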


\section{Proof of Theorem \ref{non-existence}}

Note that the assumption made on $V$ implies that $\B_{\tau,\psi}\leq
0$. Using this we will prove that the
Einstein-scalar field Lichnerowicz equation (\ref{LE}) with the boundary condition (\ref{bd})
admits no positive smooth solution. Assume that smooth $\phi > 0$ satisfies (\ref{LE})-(\ref{bd}) and set
\begin{equation*}
\alpha = \inf _M \phi.
\end{equation*} 

If $\alpha$ is attained at some point $p \in \widetilde{M}$ then
$\alpha$ is strictly positive. Applying the maximum principle, one
immediately gets a contradiction.
 
Now suppose that $\alpha$ is not attained in $\widetilde{M}$. In this
case $\alpha = 1$ by (\ref{bd}). Moreover, by Theorem \ref{max}, there
exists a sequence of points $p_k \in \widetilde{M}$, such that 
$p_k \rightarrow p \in \partial M$, and
\begin{equation*}
\phi (p_k) \rightarrow \alpha, 
\hspace{20pt} 
\liminf _{k \rightarrow \infty} \Delta _{h} \phi (p_k) \geq 0.
\end{equation*}
Evaluating (\ref{LE}) at $p_k$ and then passing to limit when $k\rightarrow \infty$ yields
\begin{equation*}
\limsup_{k\rightarrow \infty}\left[- \R_{h,\psi}(p_k)\alpha + \A_{h,W,\pi}(p_k)\alpha^{-\frac{3n-2}{n-2}} 
- \B_{\tau,\psi}(p_k)\alpha^{\frac{n+2}{n-2}}\right]
\leq 0,
\end{equation*}
which is a contradiction.

\section{Sub- and supersolution method} 

The proof of the existence result relies on the method of sub- and 
supersolutions. In the proposition below (which was stated and proved in a more general form in \cite{An-Cr}) we recall how to construct sub- and supersolutions on asymptotically hyperbolic manifolds.

\begin{proposition}\cite{An-Cr}\label{subsuperconstruction}
Let $(M,h)$ be an asymptotically hyperbolic manifold. Consider the equation 
\begin{equation}\label{general equation}
\Delta _h u+F(x,u)=0
\end{equation} 
for a scalar function $u$ on $M$.
Suppose that 
\begin{itemize}
\item [(i)] There exist constants $C_+\geq 1$ and $C_-\leq 1$ such that for any $x\in M$ we have
\begin{equation*}
F(x,C_+)\leq 0 \text{ and } F(x,C_-)\geq 0. 
\end{equation*}
\item[(ii)] There exist a constant $C>0$ and $0<\delta<n-1$ such that for $0<\rho<\rho_0$ we have 
\begin{equation*}
\begin{aligned}
\text{if } 1\leq u\leq C_+ &\text{ then } F(x,u)\leq C\rho^{\delta};\\
\text{if } C_-\leq u\leq 1 &\text{ then } F(x,u)\geq -C\rho^{\delta}.
\end{aligned}
\end{equation*}
\end{itemize}
Then there exists a constant $B>0$ such that $u_+=\min\{1+B\rho^\delta,C_+\}$ and $u_-=\max\{1-B \rho^\delta,C_-\}$ are respectively a supersolution and a subsolution of the equation (\ref{general equation}), i.e. 
\begin{equation*}
\Delta _h u_++F(x,u_+)\leq 0 \text{ and } \Delta _h u_-+F(x,u_-) \geq 0.
\end{equation*}

\end{proposition}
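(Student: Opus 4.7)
The plan is to exploit the explicit formula for $\Delta_h \rho^\delta$ near infinity and show that $1 + B\rho^\delta$ (resp. $1 - B\rho^\delta$) satisfies the supersolution (resp. subsolution) inequality in a collar neighborhood of $\partial M$ for $B$ sufficiently large, while in the interior the constants $C_+$ and $C_-$ handle the respective inequalities trivially by virtue of hypothesis (i).

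First I would compute $\Delta_h \rho^\delta$ using the conformal change formula $\Delta_h f = \rho^2 \Delta_g f - (n-2)\rho\, g(\nabla_g \rho, \nabla_g f)$ for $h = \rho^{-2} g$. A direct calculation yields
\[
\Delta_h \rho^\delta \;=\; -\delta(n-1-\delta)\,\rho^\delta\,|d\rho|_g^2 \;+\; \delta\,\rho^{\delta+1}\,\Delta_g \rho.
\]
Because $(\widetilde M, h)$ is asymptotically hyperbolic, $|d\rho|_g \to 1$ as $\rho \to 0$ and $\Delta_g \rho$ remains bounded up to $\partial M$; together with the assumption $0<\delta<n-1$ (which makes $\delta(n-1-\delta)$ strictly positive), this yields a constant $c>0$ and $\rho_1 \in (0,\rho_0]$ such that $\Delta_h \rho^\delta \leq -c\,\rho^\delta$ on the collar $\{0<\rho<\rho_1\}$.

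Next I would verify the supersolution inequality for $v_+ \definedas 1 + B\rho^\delta$ on this collar. By hypothesis (ii), as long as $1 \leq v_+ \leq C_+$,
\[
\Delta_h v_+ + F(x, v_+) \;\leq\; -Bc\,\rho^\delta + C\,\rho^\delta,
\]
which is nonpositive once $B \geq C/c$. The constant $C_+$ is itself a supersolution since $\Delta_h C_+ = 0$ and $F(x,C_+)\leq 0$ by (i), and symmetrically $C_-$ is a subsolution. Consequently $u_+ = \min\{v_+, C_+\}$ and $u_- = \max\{1-B\rho^\delta, C_-\}$ are, respectively, a supersolution and a subsolution in the generalized (weak or viscosity) sense in which the minimum of two supersolutions and the maximum of two subsolutions are again of the respective type.

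The main technical point is pinning down the sign of $\Delta_h \rho^\delta$ through the conformal computation: the range $0<\delta<n-1$ is dictated precisely by the requirement that the leading coefficient $\delta(n-1-\delta)$ be positive, so that the Laplacian beats the $\rho^\delta$ contribution from $F$. Everything else is bookkeeping: matching $B$ to $C/c$, choosing $\rho_1$ so small that the remainder $\delta \rho^{\delta+1}\Delta_g \rho$ is absorbed into the main term and that $1+B\rho^\delta \leq C_+$ holds on the collar, and handling the (harmless) nonsmoothness of $u_\pm$ on the crossover set.
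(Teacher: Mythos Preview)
Your argument is essentially the paper's: the same conformal computation of $\Delta_h\rho^\delta$, the same use of $0<\delta<n-1$ to force the leading coefficient negative, and the same idea of gluing $1\pm B\rho^\delta$ to the constants $C_\pm$. One bookkeeping point deserves correction: you propose shrinking $\rho_1$ so that $1+B\rho^\delta\le C_+$ on the collar, but this leaves open what happens for $\rho\ge\rho_1$, where $u_+$ could still equal $1+B\rho^\delta$ yet you have no estimate on $\Delta_h\rho^\delta$ there (and your ``minimum of two supersolutions'' device would require $v_+$ to be a supersolution on all of $M$, which you have not established). The paper instead fixes $\rho'=\min\{\rho_0,\rho_1\}$ first and then takes $B\ge (C_+-1)/(\rho')^\delta$, forcing $u_+=C_+$ for all $\rho\ge\rho'$; with that adjustment your proof is complete and matches the paper's.
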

\begin{proof}
A computation shows that
\begin{equation}\label{laplacian}
\Delta_h \rho ^{\delta}=\delta(1-n+\delta)\rho ^{\delta}|\nabla ^g \rho|_g ^2+\delta \rho ^{\delta+1} \Delta _g \rho,
\end{equation}
which means that there exists $C_1>0$ and $\rho_1>0$ such that $\Delta _h \rho^\delta \leq -C_1 \rho^\delta$ for $0<\rho<\rho_1$. Let $\rho'\definedas \min\{\rho_0,\rho_1\}$. 

We construct a constant $B_+>0$ such that $u_+=\min\{1+B_+\rho^\delta,C_+\}$ is a supersolution. First, we want $u_+=C_+$ for all $\rho\geq\rho'$, and therefore we require $B_+\geq \frac{C_+-1}{(\rho')^\delta}$. Second, if $u_+=C_+$ then it is clearly a supersolution. Now it remains to ensure that $\Delta_h u_++F(x,u_+)\leq 0$ on that part of $\{0<\rho<\rho'\}$ where $u_+=1+B_+\rho^\delta$. But for $0<\rho<\rho'$ we have 
\begin{equation*}
\Delta_h u_++F(x,u_+)\leq (-B_+C_1+C)\rho^\delta. 
\end{equation*}
Consequently, the constant $B_+=\max \{\frac{C}{C_1},\frac{C_+-1}{(\rho')^\delta} \}$ satisfies our needs. 

The constant $B_-$ such that $u_-=\max\{1+B_-\rho^\delta,C_-\}$ is a subsolution is constructed similarly. Finally, we set $B\definedas\max\{B_+,B_-\}$.
\end{proof} 

Next statement is a sub- and supersolution theorem for asymptotically hyperbolic manifolds. 

\begin{theorem}\label{subsupertheorem}
Let $(M,h)$ be an asymptotically hyperbolic manifold and suppose that $F:M\times(0,+\infty)\rightarrow \rn$ is smooth in both arguments. Assume that there exist continuous functions $u_-$ and $u_+$ in $W^{1,2}_{loc}(M)$  such that $0<u_-<u_+<C$ and $u_-$ and $u_+$ are respectively a weak subsolution and a weak supersolution of (\ref{general equation}). Then there exists a smooth solution $u$ of (\ref{general equation}) on $M$ such that $u_-\leq u\leq u_+$.
\end{theorem}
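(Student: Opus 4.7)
The plan is to exhaust $M$ by an increasing sequence of relatively compact smooth open subdomains $\Omega_1, \Omega_2, \ldots$ with $\overline{\Omega_k} \subset \Omega_{k+1}$ and $\bigcup_k \Omega_k = M$, to solve a Dirichlet problem for $\Delta_h u + F(x,u)=0$ on each $\Omega_k$ with boundary data sandwiched between $u_-$ and $u_+$, and finally to pass to the limit using local elliptic estimates.

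On each $\Omega_k$ I would choose a smooth function $\phi_k$ on $\partial \Omega_k$ with $u_- \leq \phi_k \leq u_+$ there and invoke the classical sub- and supersolution method on the compact manifold with boundary $\overline{\Omega_k}$ in monotone iteration form. Fix $L>0$ so large that $s \mapsto F(x,s) + Ls$ is nondecreasing on $[\inf_{\overline{\Omega_k}} u_-, \sup_{\overline{\Omega_k}} u_+]$ for every $x \in \overline{\Omega_k}$; such an $L$ exists because $F$ is smooth and the relevant range of $s$ is compact and bounded away from zero. Starting from $v_0 = u_+|_{\overline{\Omega_k}}$, define iterates by
\begin{equation*}
(\Delta_h - L) v_{j+1} = -F(\cdot, v_j) - L v_j \text{ in } \Omega_k, \qquad v_{j+1} = \phi_k \text{ on } \partial \Omega_k.
\end{equation*}
The weak maximum principle for the linear operator $\Delta_h - L$ shows that $\{v_j\}$ is monotone decreasing and remains above $u_-$; hence it converges to a weak solution $u_k \in W^{1,2}(\Omega_k)$ with $u_- \leq u_k \leq u_+$, and interior Schauder estimates upgrade $u_k$ to a smooth classical solution of $\Delta_h u_k + F(\cdot, u_k) = 0$ in $\Omega_k$.

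For any fixed compactum $K \subset M$, once $k$ is large enough that $K$ lies at positive distance from $\partial \Omega_k$, the pointwise bounds $0 < \inf_K u_- \leq u_k \leq \sup_K u_+ < \infty$ together with smoothness of $F$ yield uniform interior $C^{2,\alpha}$ bounds on $u_k$ over $K$, via the standard linear elliptic theory applied to $\Delta_h u_k = -F(\cdot, u_k)$. Arzelà--Ascoli combined with a diagonal subsequence extraction then produces a limit $u \in C^2_{\loc}(M)$ that solves $\Delta_h u + F(x,u)=0$ pointwise on $M$ and satisfies $u_- \leq u \leq u_+$; iterating the Schauder estimates using the smoothness of $F$ makes $u$ smooth.

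The main obstacle is carrying out the monotone iteration when $u_\pm$ are only assumed to lie in $W^{1,2}_{\loc}$: both the iteration and the comparison with $u_\pm$ must be phrased entirely in the weak sense. The fact that each iterate stays in $[u_-, u_+]$ has to be deduced from a weak comparison principle for the linear operator $\Delta_h - L$, which is available for $W^{1,2}$ sub- and supersolutions by the standard truncation argument of testing $(v_{j+1}-u_+)_+$ and $(u_--v_{j+1})_+$ against the weak formulation. Once this comparison is established, the passage from weak to classical solutions is routine interior regularity, and the exhaustion-plus-diagonal step is unproblematic thanks to the uniform pointwise sandwich between $u_-$ and $u_+$.
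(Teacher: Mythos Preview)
Your proposal is correct and follows essentially the same approach as the paper: exhaust $M$ by relatively compact subdomains, apply the monotone iteration scheme on each $\Omega_k$ to obtain solutions $u_k$ trapped between $u_-$ and $u_+$, derive uniform interior elliptic estimates, and extract a diagonal subsequence converging to a smooth solution on $M$. The paper is terser---it simply cites Sattinger for the monotone iteration and notes that the maximum principle holds for functions in $W^{1,2}(\Omega_k)\cap C(\overline{\Omega}_k)$---whereas you spell out the choice of $L$, the iteration, and the weak comparison argument via truncation, but the underlying strategy is the same.
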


\begin{remark}
If $F:M\times[0,+\infty)\rightarrow \rn$ is smooth in both arguments, then $u_-=0$ can also be used as a subsolution.
\end{remark}

\begin{proof}
The proof is standard, cf. \cite[Proposition 2.1]{Av-McO}. 

To construct a solution of (\ref{general equation}) on $M$, suppose that 
\begin{equation*}
M=\bigcup _{k=1} ^{\infty} \Omega _k \, ,
\end{equation*}
where $\Omega _k$ are open and bounded with $C^1$ boundary and $\overline{\Omega}_k\subset
\Omega_{k+1}$. It is easy to check that maximum principle holds for functions in $W^{1,2}(\Omega _k)\cap C(\overline{\Omega}_k)$, therefore the 
monotone iteration scheme \cite[Theorem 2.3.1]{Sat} can be applied to produce a $W^{2,p}$ solution $u_k$ such that $u_-\leq u_k\leq u_+$ on $\Omega _k$. 

Let us consider the sequence $\{u _k\}_{k>3}$ on $\overline{\Omega}
_3$.  By construction,  for $x \in\overline{\Omega}_3$ and $k \geq 4$, using local Schauder estimates
we find that

\begin{equation*}
\left\|u _k\right\|_{W^{2,p} (\Omega_2)}
\leq 
C \left\|F(\cdot,u_k(\cdot))\right\|_{L^p(\Omega_3)}
+ C \left\|u_k\right\|_{L^p(\Omega_3)}<C,
\end{equation*}
where the generic constant $C$ does not depend on $k$ and $p \geq 1$ is arbitrary.
Suppose that $p>n$. Then it follows from the Sobolev embedding theorem
that $\left\|u_k\right\|_{C^{0,\gamma}(\overline{\Omega}_2)}\leq
C$. We apply interior elliptic estimates, and deduce that
\begin{equation*}
\left\|u_k\right\|_{C^{2,\gamma} (\Omega_1)}
\leq C \left\|F(\cdot,u_k(\cdot))\right\|_{C^{0,\gamma}(\Omega_2)}+C \left\|u_k\right\|_{C^{0,\gamma}(\Omega_2)}<C
\end{equation*}
uniformly in $k$. Since $C^{2,\gamma}(\Omega _1)\subset\subset C^2
(\Omega _1)$, we finally deduce that $\{u_k\}$ has a subsequence
$\{u_{k_i}\}$ which converges to a solution of (\ref{general equation}) on
$\Omega _1$. 

We set $u^1_i \definedas u_{k_i}$.
We repeat this procedure with $\{u^1_{k}\}$ on
$\overline{\Omega} _4$ to obtain a subsequence $\{u^1_{k_i}\}$ which
converges to a solution of (\ref{general equation}) on $\Omega _2$. Set 
$u^2_i \definedas u^1_{k_i}$. Proceeding
by induction for every $j$ we can construct a subsequence
$\{u^j_{k_i}\}$ converging to a solution of (\ref{general equation}) on
$\Omega_{j+1}$. Then a diagonal subsequence $\{u ^j _{k_j}\}$
converges to a $C^2$ solution $u$ of (\ref{general equation}) on $M$. Further regularity of $u$ follows by induction and bootstrap argument.
\end{proof}

\section{Existence} 

In this section we prove the existence part of Theorem \ref{existence}. For the sake of convenience, we state this result in the following form. 

\begin{proposition}
If $\B_{\tau,\psi}$ is positive and is bounded from above, and if for some $0<\delta <n-1$ we have
$\A_{h,W,\pi}\in C^0 _{\delta}$ and $\R_{h,\psi}+\B_{\tau, \psi}\in
C^0 _{\delta}$, then the Einstein-scalar field Lichnerowicz equation
(\ref{LE}) admits a positive smooth solution $\phi$ such that
$\phi-1\in C^0 _{\delta}$. 
\end{proposition}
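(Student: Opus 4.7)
The plan is to cast the Lichnerowicz equation (\ref{LE}) in the form $\Delta_h \phi + F(x,\phi) = 0$ with
$$F(x,u) = -\R_{h,\psi}\,u + \A_{h,W,\pi}\,u^{-\frac{3n-2}{n-2}} - \B_{\tau,\psi}\,u^{\frac{n+2}{n-2}},$$
and to invoke the two preceding results in sequence: Proposition \ref{subsuperconstruction} to produce global barriers $u_{\pm}$ of the form $1 \pm B\rho^{\delta}$ suitably truncated, followed by Theorem \ref{subsupertheorem} to extract a classical solution $\phi$ with $u_{-} \leq \phi \leq u_{+}$. The real work is to verify the two hypotheses of Proposition \ref{subsuperconstruction}.

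For hypothesis (i), I will choose $C_+$ large and $C_-$ small so that $F(\cdot,C_{+}) \leq 0$ and $F(\cdot,C_{-}) \geq 0$ uniformly on $M$. Since $R_h=-n(n-1)$ one has $-\R_{h,\psi} \geq \tfrac{n(n-2)}{4}$ everywhere, $\A_{h,W,\pi}$ is bounded (being in $C^{0}_{\delta}$ on the compact manifold with boundary), $\B_{\tau,\psi}$ is bounded above by hypothesis, and $\B_{\tau,\psi}$ is bounded below by a positive constant — for this last point I will use that $\R_{h,\psi}+\B_{\tau,\psi}\in C^{0}_{\delta}$ forces $\B_{\tau,\psi} \to -\R_{h,\psi}\geq \tfrac{n(n-2)}{4}$ at infinity, which combined with pointwise positivity on the compact interior gives $\inf_{M}\B_{\tau,\psi}>0$. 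Dividing $F(\cdot,C_{+}) \leq 0$ by $C_{+}$ then reduces the problem to making $C_{+}^{4/(n-2)}$ large relative to $(\sup_M(-\R_{h,\psi})+\sup_M\A_{h,W,\pi})/\inf_M\B_{\tau,\psi}$, which is possible since the term $\B_{\tau,\psi} u^{4/(n-2)}$ dominates for large $u$. Dropping the non-negative $\A$ term in the other inequality, the condition $F(\cdot,C_{-})\geq 0$ reduces to $C_{-}^{4/(n-2)} \leq \inf_M(-\R_{h,\psi})/\sup_M\B_{\tau,\psi}$, which is compatible with $C_{-}\leq 1$.

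For hypothesis (ii), the key algebraic observation is the regrouping
$$F(x,u) = -(\R_{h,\psi}+\B_{\tau,\psi})\,u \;+\; \B_{\tau,\psi}\,u\,(1-u^{4/(n-2)}) \;+\; \A_{h,W,\pi}\,u^{-(3n-2)/(n-2)}.$$
Neither $\R_{h,\psi}$ nor $\B_{\tau,\psi}$ decays at infinity, but what the hypotheses \emph{do} give is that their sum and $\A_{h,W,\pi}$ both belong to $C^{0}_{\delta}$. For $1\leq u\leq C_{+}$ the first term is bounded above by $C_{+}|\R_{h,\psi}+\B_{\tau,\psi}|\leq C\rho^{\delta}$, the second is non-positive since $u^{4/(n-2)}\geq 1$, and the third lies in $[0,\A_{h,W,\pi}]\subset[0,C\rho^{\delta}]$, giving $F(x,u)\leq C\rho^{\delta}$. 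For $C_{-}\leq u\leq 1$ the same three terms are respectively bounded below by $-C\rho^{\delta}$, by $0$, and by $0$, yielding $F(x,u)\geq -C\rho^{\delta}$.

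With (i) and (ii) confirmed, Proposition \ref{subsuperconstruction} produces the required continuous subsolution $u_{-}=\max(1-B\rho^{\delta},C_{-})$ and supersolution $u_{+}=\min(1+B\rho^{\delta},C_{+})$, satisfying $0<C_{-}\leq u_{-}\leq u_{+}\leq C_{+}$ with $u_{\pm}-1\in C^{0}_{\delta}$. Theorem \ref{subsupertheorem} then yields a solution $\phi$ with $u_{-}\leq\phi\leq u_{+}$; in particular $\phi>0$ and $\phi-1\in C^{0}_{\delta}$, while smoothness of $\phi$ follows from the standard elliptic bootstrap already incorporated in that theorem. The step I expect to be most delicate is precisely the $u$-uniformity in hypothesis (ii): because $\R_{h,\psi}$ and $\B_{\tau,\psi}$ are individually of order one near infinity, a term-by-term estimate of $F(x,u)$ is hopeless, and the success of the argument hinges on the cancellation between them that the regrouping above makes explicit.
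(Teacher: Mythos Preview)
Your proposal is correct and follows essentially the same route as the paper: define $F$, verify hypotheses (i) and (ii) of Proposition~\ref{subsuperconstruction} via the regrouping $F(x,u)\le -u(\R_{h,\psi}+\B_{\tau,\psi})+\A_{h,W,\pi}$ (and its reverse for $u\le 1$), then apply Theorem~\ref{subsupertheorem}. Your explicit justification that $\inf_M\B_{\tau,\psi}>0$ (from $\R_{h,\psi}+\B_{\tau,\psi}\in C^0_\delta$ and $-\R_{h,\psi}\ge \tfrac{n(n-2)}{4}$) is a detail the paper leaves implicit in its claim that $\lim_{u\to\infty}\sup_x F(x,u)=-\infty$, but the arguments are otherwise the same.
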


Note that the assumption (\ref{exist_ass_3}) on $V$ implies 
$\B_{\tau,\psi}> 0$, that $V$ is bounded from below yields that $\B_{\tau,\psi}$ is bounded from above, and that from 
(\ref{exist_ass_1}) it follows that 
$\R_{h,\psi}+\B_{\tau, \psi}\in C^0 _{\delta}$. Moreover, the assumption 
(\ref{exist_ass_2}) is exactly that $\A_{h,W,\pi}\in C^0 _{\delta}$.

\begin{proof}
We first apply Proposition \ref{subsuperconstruction} with 
\begin{equation*}
F(x,u)\definedas-\R_{h,\psi}u
+\A_{h,W,\pi}u^{-\frac{3n-2}{n-2}}-\B_{\tau,\psi}u^{\frac{n+2}{n-2}}
\end{equation*}
 in order to construct sub- and supersolutions of (\ref{LE}). 

It is readily checked that $F(x,C_-)\geq 0$ is satisfied provided that $C_-^{\frac{4}{n-2}}\leq\frac{\inf(-\R_{h,\psi})}{\sup \B_{\tau,\psi}}$. It is also easy to see that $\lim _{u\rightarrow+\infty}\sup_x F(x,u)=-\infty$, hence there exists $C_+\geq 1$ such that $F(x,C_+)\leq 0$.

Moreover,  if $1\leq u\leq C_+$ then 
\begin{equation*}
F(x,u)\leq -u(\R_{h,\psi}+\B_{\tau, \psi})+\A_{h,W,\pi}\leq C \rho^\delta,
\end{equation*} 
by our assumptions on $\A_{h,W,\pi}$, $\R_{h,\psi}$ and $\B_{\tau, \psi}$. Similarly, for $C_-\leq u\leq 1$ we have 
\begin{equation*}
F(x,u)\geq -u(\R_{h,\psi}+\B_{\tau, \psi})+\A_{h,W,\pi}\geq -C \rho^\delta.
\end{equation*} 

By Proposition \ref{subsuperconstruction} there exists a constant $B>0$ such that $u_+=\min\{1+B\rho^\delta,C_+\}$ and $u_-=\max\{1-B\rho^\delta,C_-\}$ are respectively a supersolution and a subsolution of (\ref{LE}). Since $u_+$, $u_-$ and $F$ satisfy the conditions of Theorem \ref{subsupertheorem} we deduce that there exists a smooth positive solution $u$ of (\ref{LE}) such that $u-1\in C^0 _{\delta}$. \end{proof}

\section{Partial result}

When $\B_{\tau,\psi}\geq 0$ is not strictly positive, Proposition \ref{subsuperconstruction} no longer applies, since the constant $C_+$ might not exist. The theorem below is aimed at facilitating the analysis of solvability of the Einstein-scalar field Lichnerowicz equation in this case. Namely, it shows that the situation when $\A_{h,W,\pi}\not\equiv 0$ can be reduced to the case $\A_{h,W,\pi}\equiv 0$, when the problem is that of prescribed $\R_{h,\psi}$. 

\begin{theorem}
If the coefficients of the Einstein-scalar field Lichnerowicz equation (\ref{LE}) are such that $\B_{\tau,\psi}$ is nonnegative and is bounded from above, $\A_{h,W,\pi}\in C^{\alpha}_{\delta}$ and $\B_{\tau, \psi}+\R_{h,\psi}\in C^0_\delta$ then the following statements are equivalent
\begin{itemize}
\item [(i)] The Einstein-scalar field Lichnerowicz equation (\ref{LE}) admits a smooth positive solution $\phi$ such that $\phi-1\in C_{\delta}^0$.
\item [(ii)] The Einstein-scalar field Lichnerowicz equation (\ref{LE}) with $\A_{h,W,\pi}\equiv 0$ admits a smooth positive solution $\phi$ such that $\phi-1\in C_{\delta}^0$.
\item [(iii)] There exists a smooth positive $\phi$ such that $\phi-1\in C_{\delta}^0$ and $\widetilde{h}\definedas \phi^{\frac{4}{n-2}}h$ satisfies 
$\R_{\widetilde{h},\widetilde{\psi}}=-\B_{\tau,\psi}$.
\end{itemize} 
\end{theorem}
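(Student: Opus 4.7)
The strategy is: first dispatch $\mathrm{(ii)}\Leftrightarrow\mathrm{(iii)}$ as a direct consequence of the conformal transformation rule for scalar curvature; then handle $\mathrm{(i)}\Rightarrow\mathrm{(ii)}$ using the sub-/supersolution machinery already set up; and finally address the main obstacle $\mathrm{(ii)}\Rightarrow\mathrm{(i)}$ by first using $\phi_1$ from (ii) to conformally transform the equation into a favorable form, and then constructing a nonconstant supersolution from a linear Poisson problem.

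For $\mathrm{(ii)}\Leftrightarrow\mathrm{(iii)}$, I would combine $R_{\widetilde h}=\phi^{-\frac{n+2}{n-2}}\bigl(-\tfrac{4(n-1)}{n-2}\Delta_h\phi+R_h\phi\bigr)$ with $|\nabla^{\widetilde h}\psi|^2_{\widetilde h}=\phi^{-\frac{4}{n-2}}|\nabla^h\psi|^2_h$ to obtain
\[
\R_{\widetilde h,\widetilde\psi}\,\phi^{\frac{n+2}{n-2}} \;=\; -\Delta_h\phi + \R_{h,\psi}\phi,
\]
so $\R_{\widetilde h,\widetilde\psi}=-\B_{\tau,\psi}$ is precisely (\ref{LE}) with $\A_{h,W,\pi}\equiv 0$.

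For $\mathrm{(i)}\Rightarrow\mathrm{(ii)}$: a solution $\phi$ of (\ref{LE}) is automatically a supersolution of the $\A\equiv 0$ equation, since the omitted term $\A_{h,W,\pi}\phi^{-(3n-2)/(n-2)}$ is nonnegative. To match it with a subsolution, I would apply the subsolution half of the construction in the proof of Proposition \ref{subsuperconstruction} to $F_0(x,u)\definedas -\R_{h,\psi}u-\B_{\tau,\psi}u^{\frac{n+2}{n-2}}$ (no constant $C_+$ is needed, since $\phi$ already serves as a functional supersolution). The decomposition
\[
F_0(x,u) \;=\; -(\R_{h,\psi}+\B_{\tau,\psi})u + \B_{\tau,\psi}u\bigl(1-u^{\frac{4}{n-2}}\bigr),
\]
combined with $\R_{h,\psi}+\B_{\tau,\psi}\in C^0_\delta$ and the positive lower bound on $-\R_{h,\psi}$, supplies both the required $C_-$ and the decay bound $F_0(x,u)\geq -C\rho^\delta$ for $C_-\leq u\leq 1$. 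This gives $u_-=\max\{1-B\rho^\delta,C_-\}$; choosing $C_-\leq\inf\phi$ and $B\geq\|\phi-1\|_{C^0_\delta}$ forces $u_-\leq\phi$, and Theorem \ref{subsupertheorem} produces the desired $\phi_1$ with $\phi_1-1\in C^0_\delta$ by the sandwich bound.

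The hard part is $\mathrm{(ii)}\Rightarrow\mathrm{(i)}$: a constant supersolution will typically fail to exist wherever $\B_{\tau,\psi}$ vanishes while $\A_{h,W,\pi}$ does not, so Proposition \ref{subsuperconstruction} cannot be applied verbatim. Setting $\widetilde h\definedas\phi_1^{\frac{4}{n-2}}h$, the already-established $\mathrm{(ii)}\Leftrightarrow\mathrm{(iii)}$ gives $\R_{\widetilde h,\psi}=-\B_{\tau,\psi}$, and by conformal covariance of (\ref{Hamiltonian}) the problem reduces to finding $\widetilde\phi$ with $\widetilde\phi-1\in C^0_\delta$ solving
\[
\Delta_{\widetilde h}\widetilde\phi + \B_{\tau,\psi}\bigl(\widetilde\phi-\widetilde\phi^{\frac{n+2}{n-2}}\bigr) + \widetilde\A\,\widetilde\phi^{-\frac{3n-2}{n-2}} \;=\; 0,
\]
where $\widetilde\A=\phi_1^{-\frac{4n}{n-2}}\A_{h,W,\pi}\in C^\alpha_\delta$ since $\phi_1$ is smooth, positive, and tends to $1$ at infinity. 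Plugging in $\widetilde\phi\equiv 1$ gives $\widetilde\A\geq 0$, so $\widetilde\phi_-\equiv 1$ is a subsolution. For the supersolution I would invoke Lee's isomorphism $\Delta_{\widetilde h}:C^{2,\alpha}_\delta\to C^{0,\alpha}_\delta$ (valid for $0<\delta<n-1$) to produce $v\in C^{2,\alpha}_\delta$ with $-\Delta_{\widetilde h}v=\widetilde\A$; Theorem \ref{max} applied to the superharmonic $v$, which vanishes at infinity, forces $v\geq 0$. Then $\widetilde\phi_+\definedas 1+v$ is a supersolution because $v\geq 0$, $\B_{\tau,\psi}\geq 0$, $(1+v)^{(n+2)/(n-2)}\geq 1+v$, and $(1+v)^{-(3n-2)/(n-2)}\leq 1$ combine to give
\[
\Delta_{\widetilde h}(1+v) + \B_{\tau,\psi}\bigl((1+v)-(1+v)^{\frac{n+2}{n-2}}\bigr) + \widetilde\A(1+v)^{-\frac{3n-2}{n-2}} \;\leq\; -\widetilde\A + 0 + \widetilde\A \;=\; 0.
\]
Theorem \ref{subsupertheorem} then yields a smooth $\widetilde\phi$ with $1\leq\widetilde\phi\leq 1+v$, hence $\widetilde\phi-1\in C^0_\delta$, and $\phi\definedas\phi_1\widetilde\phi$ is the required solution of (\ref{LE}).
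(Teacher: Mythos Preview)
Your argument is correct, and the implications $\mathrm{(ii)}\Leftrightarrow\mathrm{(iii)}$ and $\mathrm{(i)}\Rightarrow\mathrm{(ii)}$ are handled essentially as in the paper. The interesting point is your treatment of $\mathrm{(ii)}\Rightarrow\mathrm{(i)}$, which is genuinely more direct than the paper's.

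The paper, following Maxwell's method as in \cite{Ch-Br-Is-Po}, first transforms to $\widetilde h=\phi_1^{4/(n-2)}h$ so that $\R_{\widetilde h,\widetilde\psi}=-\B_{\tau,\psi}$, then solves the \emph{linear} equation $-\Delta_{\widetilde h}\theta+\B_{\tau,\psi}\theta=\A_{\widetilde h,\widetilde W,\widetilde\pi}$ for a nonnegative $\theta\in C^0_\delta$ (this is the auxiliary Lemma~\ref{help-lemma}), sets $\phi_2=1+\theta$, performs a \emph{second} conformal change to $\widehat h=\phi_2^{4/(n-2)}\widetilde h$, computes $\R_{\widehat h,\widehat\psi}$ explicitly, and finally verifies that the resulting transformed Lichnerowicz equation satisfies all hypotheses of Proposition~\ref{subsuperconstruction}, so that constant barriers $C_\pm$ exist and the machinery runs. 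You instead stop after the first conformal change: in the $\widetilde h$-equation the constant $1$ is visibly a subsolution, and $1+v$ with $-\Delta_{\widetilde h}v=\widetilde\A$ is a supersolution by a one-line inequality check. This avoids the second transformation and the rather lengthy verification of condition~(ii) of Proposition~\ref{subsuperconstruction}. What the paper's route buys is that the final step fits exactly into the template of Proposition~\ref{subsuperconstruction} (constant barriers plus $O(\rho^\delta)$ control), so no ad~hoc supersolution is needed; your route is shorter but requires the observation that $1+v$ works directly.

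One small gap: your justification that $\widetilde\A=\phi_1^{-4n/(n-2)}\A_{h,W,\pi}\in C^\alpha_\delta$ (``since $\phi_1$ is smooth, positive, and tends to $1$'') is not quite enough, because interior smoothness of $\phi_1$ does not by itself give uniform H\"older control near $\partial M$. The paper fills this via an elliptic regularity step: writing $\phi_1=1+w$, one shows the right-hand side of $\Delta_h w=\phi_1(\R_{h,\psi}+\B_{\tau,\psi}\phi_1^{4/(n-2)})$ lies in $C^0_\delta\subset W^{0,p}_{\delta'}$, whence Lee's Fredholm theory gives $w\in W^{2,p}_{\delta'}\hookrightarrow C^{1,\gamma}_{\delta'}$ and in particular $\phi_1\in C^{0,\gamma}_0$. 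With that in hand your membership claim (with exponent $\min\{\alpha,\gamma\}$ in place of $\alpha$) is justified, and the rest of your argument goes through.
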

\begin{proof}
If $\widetilde{h}= \phi^{\frac{4}{n-2}}h$ then
\begin{equation*}
\R_{\widetilde h, \widetilde \psi}=\phi ^{-\frac{n+2}{n-2}}
(-\Delta _h \phi+\R_{h,\psi} \phi).
\end{equation*} 
It is obvious from this formula that (ii) and (iii) are equivalent. 

Suppose that (i) holds. Since $\A_{h,W,\pi}\geq 0$, the solution of (\ref{LE}) is also a supersolution of (\ref{LE}) with $\A_{h,W,\pi}\equiv 0$, and from now on it will be denoted by $\phi_+$. Note that there exists $C_0>0$ such that $\phi_+\geq 1-C_0 \rho^\delta$. 

The respective subsolution $\phi_-$ is easily constructed by Proposition \ref{subsuperconstruction}. Namely, if $\B_{\tau,\psi} \equiv 0$, we pick a constant $C_-$ so that $C_-\leq \min \phi_+$, and if $\B_{\tau,\psi} \not\equiv 0$, we choose
\begin{equation*}
C_-\leq \min \left\{ \left( \frac{\inf (-\R_{h,\psi})}{\sup \B_{\tau,\psi}} \right)^{\frac{n-2}{4}}, \min \phi_+ \right\}. 
\end{equation*}
It is also easy to check that if $C_-\leq u \leq 1$ then $-\R_{h,\psi}u-\B_{\tau,\psi}u^{\frac{n+2}{n-2}}\geq -C\rho^{\delta}$. By Proposition  \ref{subsuperconstruction} we deduce that there exists $B>0$ such that $\phi_-=\max\{C_-,1-B\rho^{\delta}\}$ is a subsolution. However, note that we might need to increase $B$ in order to ensure that $\phi_-\leq 1-B\rho^{\delta}\leq 1-C_0 \rho^\delta \leq \phi_+$. Applying Theorem \ref{subsupertheorem}, we deduce that (ii) holds.

The proof will be completed if we show that (iii) implies (i). The following supplementary lemma will be required.
\begin{lemma}\label{help-lemma} 
Suppose that smooth functions $f$ and $\xi\not\equiv 0$ are nonnegative, and, moreover, that $\xi \in C^\alpha _{\delta}$ for some
$0<\delta<n-1$ and $0<\alpha<1$. Then the equation
\begin{equation}\label{help}
-\Delta_h u+ f u=\xi
\end{equation}
has a nonnegative smooth solution $u \in C^0 _{\delta}$. 
\end{lemma}

\begin{proof}
The proof is again based on sub- and supersolution method. It is clear that $u_-=0$ can be used as a subsolution, thus it remains to construct a supersolution.

Since the $L^2$ kernel of $\Delta_h$ is zero, by Theorem C and Theorem F in \cite{Lee}, we see that there exists a solution $v\in C_{\delta}^{2,\alpha}$ of the equation $-\Delta_h v=\xi$ provided that $0<\delta<n-1$ and $0<\alpha<1$. Moreover, $v$ is not constant  since $\xi \not \equiv 0$, and from the Hopf strong maximum principle it follows that $v$ is nonnegative.  Since $f$ is nonnegative, it is clear that $u_+\definedas v$ can be used as a supersolution. 

The application of Theorem \ref{subsupertheorem} completes the proof. \end{proof}

Assume that (iii) holds, that is, that there exists a smooth positive $\phi_1$ such that $\phi_1-1\in C_{\delta}^0$ and $\widetilde{h}\definedas \phi_1^{\frac{4}{n-2}}h$ satisfies 
\begin{equation}\label{prescribed1}
\R_{\widetilde{h},\widetilde{\psi}}=-\B_{\tau,\psi}.
\end{equation}
To show that (i) holds we will follow the argument from the proof of Proposition 3 in \cite{Ch-Br-Is-Po}, which is based on a method presented by Maxwell in \cite{Max}. Since we have already shown that (ii) and (iii) are equivalent, it can be assumed that $\A_{h,W,\pi}\not \equiv 0$ and that 
\begin{equation}\label{prescribed2}
\Delta _h \phi_1-\R_{h,\psi}\phi_1-\B_{\tau,\psi}\phi_1^{\frac{n+2}{n-2}}=0.
\end{equation}

Note that from (iii) and the fact that $\R_{h,\psi}+\B_{\tau, \psi}\in C^0 _{\delta}$ it can be deduced that $\phi_1\in C^{0,\gamma}_0$ for some $0<\gamma<1$. Indeed, we know that $\phi_1=1+w$ where $w\in C_\delta ^0$. It is obvious that $w$ satisfies
\begin{equation*}
\Delta _h w=\phi_1(\R_{h,\psi}+\B_{\tau,\psi}\phi_1^{\frac{4}{n-2}}).
\end{equation*}
Observe that $\phi_1(\R_{h,\psi}+\B_{\tau,\psi}\phi_1^{\frac{4}{n-2}})=\phi_1(\R_{h,\psi}+\B_{\tau,\psi}+O(\rho^\delta))$ is in $C^0_\delta$, hence in $W_{\delta'}^{0,p}$ for $p$ sufficiently large,  where $0<\delta'<\delta-\frac{n-1}{p}$. It  will be assumed that $p>n$. Since $0<\delta'+\frac{n-1}{p}<\delta< n-1$, we deduce by Theorem C and Theorem F in \cite{Lee} that $w\in W^{2,p}_{\delta'}$, hence $w\in C^{1,\gamma}_{\delta'}$ for some $0<\gamma<1$. Finally, $w\in C^{0,\gamma}_{0}$ and the same holds for $\phi_1=1+w$.

Recall that $\phi_1$ is bounded away from zero, thus $\A_{\widetilde{h},\widetilde{W},\widetilde{\pi}}=\phi _1^{-\frac{4n}{n-2}} \A_{h,W,\pi} \in C^\lambda _{\delta}$ for $\lambda\definedas\min\{\alpha,\gamma\}$. Since both $\B _{\tau,\psi}$ and $\A_{\widetilde{h},\widetilde{W},\widetilde{\pi}}$ are nonnegative, it follows from Lemma \ref{help-lemma} that the equation
\begin{equation*}
-\Delta _{\widetilde{h}} \theta+\B_{\tau,\psi} \theta=\A_{\widetilde{h},\widetilde{W},\widetilde{\pi}}
\end{equation*} 
has a nonnegative smooth solution $\theta \in C^0 _\delta$.
Hence $\phi_2=1+\theta$ solves the equation
\begin{equation}\label{phi2}
-\Delta _{\widetilde{h}} \phi_2+\B_{\tau,\psi} \phi_2
=\A_{\widetilde{h},\widetilde{W},\widetilde{\pi}}+\B_{\tau,\psi}.
\end{equation}
Set $\widehat{h}\definedas\phi _2 ^\frac{4}{n-2} \widetilde{h}$. Using
(\ref{prescribed1}) and (\ref{phi2}), we compute
\begin{equation*}
\begin{split}
\R_{\widehat{h},\widehat{\psi}}
&=
\frac{n-2}{4(n-1)}(R_{\widehat{h}}-|\nabla ^{\widehat{h}} \psi |_{\widehat{h}}^2)\\
&=
\phi_2^{-\frac{n+2}{n-2}}(-\Delta _{\widetilde{h}} \phi_2
+\R_{\widetilde{h},\widetilde{\psi}}\phi_2)\\ 
&=\phi_2
^{-\frac{n+2}{n-2}}(-\Delta _{\widetilde{h}} \phi_2
-\B_{\tau,\psi}\phi_2)\\ 
&=\phi_2 ^{-\frac{n+2}{n-2}}(\A
_{\widetilde{h},\widetilde{W},\widetilde{\pi}} +\B_{\tau,\psi}(1-2
\phi_2))\\ 
&=\A_{\widehat{h},\widehat{W},\widehat{\pi}} \phi _2
^{\frac{3n-2}{n-2}}+\B_{\tau,\psi}(\phi _2 ^{-\frac{n+2}{n-2}}-2\phi
_2 ^{-\frac{4}{n-2}}),
\end{split}
\end{equation*}
and the Einstein-scalar field Lichnerowicz equation with respect to
the conformally transformed background data $(\widehat{h},
\widehat{\sigma},\widehat{\tau},\widehat{\psi},\widehat{\pi})$ becomes 
\begin{equation*}
\Delta _{\widehat{h}} \phi +
\A_{\widehat{h},\widehat{W},\widehat{\pi}} (\phi
^{-\frac{3n-2}{n-2}}-\phi _2 ^{\frac{3n-2}{n-2}} \phi)+\B _{\tau,
  \psi} (2\phi _2 ^{-\frac{4}{n-2}}\phi-\phi _2
^{-\frac{n+2}{n-2}}\phi-\phi ^{\frac{n+2}{n-2}})=0.
\end{equation*}

It is checked straightforwardly that 
\begin{equation*}
F(x,u)\definedas
\A_{\widehat{h},\widehat{W},\widehat{\pi}} (u
^{-\frac{3n-2}{n-2}}-\phi _2 ^{\frac{3n-2}{n-2}} u)+\B _{\tau,
  \psi} (2\phi _2 ^{-\frac{4}{n-2}}u-\phi _2
^{-\frac{n+2}{n-2}}u-u ^{\frac{n+2}{n-2}})
\end{equation*}
satisfies the conditions of Proposition \ref{subsuperconstruction}. Indeed, one easily verifies that $C_+\geq 2^\frac{n-2}{4}$ and $C_-\leq \inf \phi_2^{-1}$ satisfy $F(x,C_+)\leq 0$ and $F(x,C_-)\geq 0$ respectively. Moreover, if $1\leq u\leq C_+$ then 
\begin{equation*}
u^{-\frac{3n-2}{n-2}}-\phi _2 ^{\frac{3n-2}{n-2}} u\leq 1-\phi_2^{\frac{3n-2}{n-2}}=-\frac{3n-2}{n-2}\theta+o(\theta)\leq C\rho^{\delta},
\end{equation*}
and 
\begin{equation*}
\begin{split}
2\phi _2 ^{-\frac{4}{n-2}}u-\phi _2^{-\frac{n+2}{n-2}}u-u ^{\frac{n+2}{n-2}}=u \phi_2^{-\frac{n+2}{n-2}}(\phi_2-1)+u(\phi_2^{-\frac{4}{n-2}}-u^{\frac{4}{n-2}})\\
\leq u \phi_2^{-\frac{n+2}{n-2}}\theta\leq C \rho^\delta.
\end{split}
\end{equation*}
since $\phi_2=1+\theta\geq 1$, and $\theta\in C^0_{\delta}$. This implies that $F(x,u)\leq C\rho^{\delta}$ for $1\leq u\leq C_+$, and it is similarly checked that $F(x,u)\geq -C \rho^{\delta}$ for $C_-\leq u\leq 1$. 

By Proposition \ref{subsuperconstruction}  the sub- and supersolutions are now constructed, and it only remains to apply Theorem \ref{subsupertheorem} to complete the proof.
\end{proof}

\section{The uniqueness}

In this final section we prove that if the solution of  the Einstein-scalar field Lichnerowicz equation (\ref{LE}) with the boundary condition (\ref{bd}) exists, then it is unique. By this we, in particular, complete the proof of Theorem \ref{existence}.

Assume that $\phi _1$ and $\phi _2$ are two positive solutions to the
boundary value problem (\ref{LE})-(\ref{bd}). By the conformal
covariance, $1 =\phi _2 \phi _2 ^{-1}$ and 
$\phi \definedas \phi _1 \phi _2 ^{-1}$ are then solutions to the
equation
\begin{equation*}
\Delta _{\widetilde h} \phi
-\R_{\widetilde h,\widetilde \psi}\phi
+\A_{\widetilde h,\widetilde W,\widetilde \pi}\phi^{-\frac{3n-2}{n-2}}
-\B_{\tau,\psi}\phi^{\frac{n+2}{n-2}}
=0
\end{equation*} 
with respect to the conformally transformed metric $\widetilde h
\definedas \phi _2 ^{\frac{4}{n-2}} h$. Hence
\begin{equation*}
\R_{\widetilde h,\widetilde \psi}
=\A_{\widetilde h,\widetilde W,\widetilde \pi}-\B_{\tau,\psi}
\end{equation*}
and $\phi$ satisfies
\begin{equation}\label{transformed}
\Delta _{\widetilde h} \phi
-(\A_{\widetilde h,\widetilde W,\widetilde \pi}
-\B_{\tau,\psi})\phi
+\A_{\widetilde h,\widetilde W,\widetilde \pi}\phi^{-\frac{3n-2}{n-2}}
-\B_{\tau,\psi}\phi^{\frac{n+2}{n-2}}
=0.
\end{equation}

Set 
\begin{equation*}
\alpha\definedas\inf _M \phi.
\end{equation*}
 If $\alpha$ is achieved at $p \in \widetilde M$ then  
\begin{equation}\label{impossible}
-(\A_{\widetilde h,\widetilde W,\widetilde \pi}(p)
-\B_{\tau,\psi}(p))\alpha
+\A_{\widetilde h,\widetilde W,\widetilde
  \pi}(p)\alpha^{-\frac{3n-2}{n-2}}
-\B_{\tau,\psi}(p)\alpha^{\frac{n+2}{n-2}}\leq 0,
\end{equation}
which is impossible in the case $\alpha<1$. If $\alpha$ is not
attained in $\widetilde M$, then the same conclusion can be drawn from
Theorem \ref{max}. Thus $\inf _M \phi\geq 1$. A similar argument shows
that $\sup _M \phi \leq 1$, and $\phi _1=\phi _2$ follows.

\section{Acknowledgments}

The author thanks Lars Andersson, Piotr Chru\'sciel,
Mattias Dahl, Romain Gicquaud, Jim Isenberg, Lucy MacNay, and Hans Ringstr\"om for 
interesting and useful discussions during this work, and the Albert
Einstein Institute, Golm, for hospitality. 

The research was supported by the Knut and Alice 
Wallenberg Foundation and the Royal Swedish Academy
of Sciences.

\end{document}